\newtheorem{theorem}{Theorem}
\newtheorem{proposition}[theorem]{Proposition}
\newcommand{\eref}[1]{(\ref{#1})}
\newcommand{\sref}[1]{Section~\ref{#1}}
\newcommand{\tref}[1]{Table~\ref{#1}}
\newcommand{\fref}[1]{Figure~\ref{#1}}
\newcommand{\cref}[1]{Chapter~\ref{#1}}
\newcommand{\propref}[1]{Proposition~\ref{#1}}
\newcommand{\thmref}[1]{Theorem~\ref{#1}}
\def\bbB{\mathbb{B}}
\def\bbI{\mathbb{I}}
\def\bbL{\mathbb{L}}
\def\bbN{\mathbb{N}}
\def\bbP{\mathbb{P}}
\def\bbQ{\mathbb{Q}}
\def\bbR{\mathbb{R}}
\def\bbW{\mathbb{W}}
\def\bbZ{\mathbb{Z}}
\def\cL{{\mathcal L}}
\def\cS{{\mathcal S}}
\def\ga{\alpha}
\def\gb{\beta}
\def\gd{\delta}
\def\ggg{\gamma} 
\def\gk{\kappa}
\def\gl{\lambda}
\def\go{\omega}
\def\gr{\rho}
\def\gs{\sigma}
\def\gt{\tau}
\def\gU{\Upsilon}
\def\ds{\ensuremath{\displaystyle}}
\def\B{\ensuremath{\gamma}}	
\def\yb{\ensuremath{\bar{y}}}
\def\zb{\ensuremath{\bar{z}}}
\begin{document}

\begin{frontmatter}

\title{Exact simulation of the sample paths of a diffusion with a finite entrance boundary}

\runtitle{Simulation of diffusions with a finite entrance boundary}

\begin{aug}

\author{\fnms{Paul A.} \snm{Jenkins}\thanksref{a}\corref{}\ead[label=e1]{p.jenkins@warwick.ac.uk}}
\address[a]{Department of Statistics, University of Warwick, Coventry CV4 7AL, United Kingdom.
\printead{e1}}

\runauthor{Jenkins}

\affiliation{Department of Statistics, University of Warwick, United Kingdom}

\end{aug}

\begin{abstract}
Diffusion processes arise in many fields, and so simulating the path of a diffusion is an important problem. It is usually necessary to make some sort of approximation via model-discretization, but a recently introduced class of algorithms, known as the \emph{exact} algorithm and based on retrospective rejection sampling ideas, obviate the need for such discretization. In this paper I extend the exact algorithm to apply to a class of diffusions with a finite entrance boundary. The key innovation is that for these models the Bessel process is a more suitable candidate process than the more usually chosen Brownian motion. 
The algorithm is illustrated by an application to a general diffusion model of population growth, where it simulates paths efficiently, while previous algorithms are impracticable.
\end{abstract}

\begin{keyword}
\kwd{Bessel process}
\kwd{Diffusion boundary}
\kwd{Exact simulation}
\kwd{Monte Carlo simulation}
\kwd{Population growth model}
\kwd{Retrospective sampling}
\end{keyword}



\end{frontmatter}

\section{Introduction}

Diffusion processes arise in many important fields, including finance, genetics, and engineering \citep{klo:pla:1999}. There is great interest in simulation and inference using diffusions, but this is a difficult problem because the transition density function of a diffusion is rarely known. It is typical to work with a discretization of an intractable diffusion model so that Monte Carlo simulation can be applied. For example, the Euler scheme 
is a discretization of time in which increments of the diffusion over each small time step are assumed to be Gaussian. There has been much work into improving this and other discretized approaches \citep{klo:pla:1999}, but a disadvantage 
is that they introduce two sources of error: a Monte Carlo error and a discretization error. The latter causes a bias, and it may be computationally expensive to make the grid spacing sufficiently fine to ensure this bias is negligible. However, for a certain class of diffusion processes described below, recent work based on \emph{retrospective} sampling ideas has obviated the need for discretization, allowing realizations to be simulated \emph{exactly} \citep{bes:rob:2005, bes:etal:2006:B, bes:etal:2008}. The key idea is to use Brownian motion as the proposal in a rejection sampling algorithm, in which it is possible to make the accept/reject decision without having to simulate a complete (infinite-dimensional) sample path. This idea has also been extended 
to perform 
parametric inference of discretely observed diffusions 
\citep{bes:etal:2006:JRSSB, 
fea:etal:2008, fea:etal:2010}.

Where the algorithm has been developed for one-dimensional diffusions, it is assumed for the most part that the diffusion has state space $\bbR$ with 
boundaries at $\pm\infty$. The goal of this paper is to extend the exact algorithm to diffusion models with a certain class of \emph{finite} boundary. 
One approach would be a straightforward modification of the algorithm presented in \citet{bes:etal:2006:B} using Brownian motion as the candidate process. However, it is easy to find examples for which this approach exhibits unacceptably high rejection rates, essentially because the paths of Brownian motion do not mimic the paths of the target diffusion sufficiently well near the boundary. An alternative approach developed in this paper is more drastic and based on a related idea in \citet{sch:etal:2013}: we replace Brownian motion in the proposal mechanism with \emph{another} well-characterized diffusion, namely the Bessel process.
As a simple motivating example, 
suppose we wanted to simulate sample paths of the diffusion with generator 
\begin{equation}
\label{eq:Besseldriftgen}
\cL = \left[\frac{2\nu + 1}{2x} + \gr \frac{I_{\nu+1}(\gr x)}{I_\nu(\gr x)}\right]\frac{d}{dx} + \frac{1}{2}\frac{d^2}{dx^2},
\end{equation}
for $\gr > 0$ and $\nu \geq -1$, where $I_\nu(\cdot)$ is the modified Bessel function of the first kind. This diffusion was introduced by \citet{wat:1975:ZWVG} and studied by \citet{rog:pit:1981}, \citet{pit:yor:1981}, and \citet{sal:1984}, among others. 
The diffusion has state space $[0,\infty)$, and when $\nu \geq 0$ the boundary at 0 is \emph{entrance} in the terminology of Feller \citep[see][\S 15.6]{kar:tay:1981}. Thus, unlike Brownian motion which can wander close to 0, this diffusion should experience a strong repulsive force whenever the diffusion approaches 0 from above, due to the singularity in the drift of asymptotic form $\sim (2\nu + 1)/(2x)$ as $x \to 0$. 
Hence, a rejection sampling algorithm using Brownian motion as a candidate process will have high rejection rates whenever the sample path approaches the boundary. However, a \emph{Bessel} process of order $\nu$ shares this singularity 
and would make a more suitable candidate process. (In fact, in this example \eref{eq:Besseldriftgen} includes the Bessel process itself as the special case $\gr \to 0$. The diffusion with generator \eref{eq:Besseldriftgen} has been described as the \emph{Bessel process in the wide sense} \citep{wat:1975:ZWVG}.) 
As will be shown, using a Bessel process when the diffusion of interest has a certain type of boundary can substantially improve algorithmic efficiency. 

The structure of the paper is as follows. In \sref{sec:EA} I give an overview of the exact algorithms EA1--3 of \citet{bes:rob:2005}, \citet{bes:etal:2006:B}, and \citet{bes:etal:2008}. After summarizing some useful properties of the Bessel process in \sref{sec:Bessel}, in \sref{sec:Bessel-EA} I develop a new exact algorithm that can be applied to a diffusion with a finite entrance boundary. The algorithm is illustrated in \sref{sec:applications} by application to conditioned diffusions and to a general diffusion model of population growth. In \sref{sec:twoboundaries} I discuss the problem of \emph{two} finite boundaries and extend EA3 to apply to this case. \sref{sec:discussion} discusses possible directions for future research.

\section{Overview of the exact algorithms}
\label{sec:EA}
Consider the scalar diffusion process $X = (X_t: t \geq 0)$ satisfying the stochastic differential equation (SDE)
\[
dX_t = \mu(X_t) dt + \sigma(X_t) dB_t, \qquad X_0 = x,\ t \geq 0.
\]
with drift coefficient $\mu:\bbR\to\bbR$ and diffusion coefficient $\sigma:\bbR\to\bbR$. In all that follows we are interested in the law of the diffusion 
only up to some fixed, finite time $T$. We first apply the \emph{Lamperti transform} $X_t \mapsto \eta(X_t) =: Y_t$ defined via
\begin{equation}
\label{eq:lamperti}
\eta(X_t) = \int^{X_t}_\xi \frac{1}{\sigma(u)} du,
\end{equation}
for some fixed $\xi$ in the state space of $X$. The process $Y = (Y_t: t \geq 0)$ satisfies
\begin{equation}
\label{eq:ySDE}
dY_t = \alpha(Y_t) dt + dB_t, \qquad Y_0 = \eta(x) =: y,\ t \geq 0,
\end{equation}
where the new drift $\alpha:\bbR\to\bbR$ is given by
\begin{equation*}
\alpha(Y_t) = \frac{\mu(\eta^{-1}(Y_t))}{\sigma(\eta^{-1}(Y_t))} - \frac{1}{2}\sigma'(\eta^{-1}(Y_t)).
\end{equation*}
Equation \eref{eq:ySDE}, with unit diffusion coefficient, will be our central object of study, and we assume it admits a unique weak solution. The diffusion coefficient, $\sigma(\cdot)$, of a one-dimensional diffusion can always be reduced to unity in this manner, but we note that the infinitesimal covariance of a multidimensional diffusion may or may not be reducible to the identity matrix \citep[][gives an explicit way to check]{ait:2008}. 

The law of $Y$, to be denoted $\bbQ_y$, is absolutely continuous with respect to the law of Brownian motion commenced from $B_0 = y$, to be denoted $\bbW_y$. This motivates the latter's use for drawing candidates in a rejection sampling algorithm. The acceptance probability in such an algorithm would be proportional to the Radon-Nikod\'ym derivative of $\bbQ_y$ with respect to $\bbW_y$, which is given by Girsanov's formula \citep[e.g.][]{rev:yor:1999}:
\begin{equation}
\label{eq:girsanov}
\frac{d\bbQ_{y}}{d\bbW_{y}}(Y) = \exp\left\{\int_0^T \alpha(Y_t) dY_t - \frac{1}{2}\int_0^T \alpha^2(Y_t)dt\right\}.
\end{equation}
Equation \eref{eq:girsanov} makes it clear why such an algorithm is impossible: to compute the rejection probability one must evaluate integrals over the whole sample path. The \emph{exact} algorithms show how it is possible to make the accept/reject decision without this requirement, using only a finite amount of computation. I give here a brief overview; for further details the reader is referred to \citet{bes:rob:2005, bes:etal:2006:B, bes:etal:2008}.

To proceed we make three further assumptions.
\begin{itemize}
	\item[(A1)] The Radon-Nikod\'ym derivative of $\bbQ_{y}$ with respect to $\bbW_{y}$ exists and is given by \eref{eq:girsanov}.
	\item[(A2)] $\alpha \in C^1$, i.e.\ $\alpha$ is continuously differentiable.
	\item[(A3)] $\alpha^2 + \alpha'$ is bounded below.
\end{itemize}
Using (A2) and It\^o's lemma applied to $A(u) := \int^u_0 \alpha(z) dz$, 
we can rewrite \eref{eq:girsanov} as
\begin{equation*}
\frac{d\bbQ_{y}}{d\bbW_{y}}(Y) = \exp\left\{A(Y_T) - A(y) - \frac{1}{2}\int_0^T [\alpha^2(Y_t) + \alpha'(Y_t)]dt\right\}.
\end{equation*}
To simplify matters, rather than working directly with $\bbW_y$ we work with the probability measure $\bbZ_y$ defined via
\[
\frac{d\bbZ_y}{d\bbW_y}(B) \propto \exp\left\{A(B_T) - A(y)\right\}.
\]
The measure $\bbZ_y$ corresponds to \emph{biased} Brownian motion \citep{bes:rob:2005}; it is the law of $(B_t: t\in[0,T] \mid B_T \sim h)$, a Brownian motion whose endpoint has pre-specified density proportional to $h(u) := \exp\left\{-(u-y)^2/(2T) + A(u)\right\}$. Paths can be drawn according to $\bbZ_y$ first by simulating a random variable $B_T \sim h$ and then interpolating from $y$ to $B_T$ using the dynamics of a Brownian bridge. This is possible under the additional assumption
\begin{itemize}
\item[(A4)] $h(u) = \exp\left\{-(u-y)^2/(2T) + A(u)\right\}$ is integrable.
\end{itemize}
Notice that if our interest was to simulate not from \eref{eq:ySDE} but rather the sample paths of the corresponding \emph{bridge}, with $Y_T = z$, then we simply adjust this step by replacing $B_T \sim h$ with $B_T \sim \delta_z$.

Under (A4) we can use candidates from $\bbZ_y$ instead of $\bbW_y$; it is immediate that
\begin{equation}
\label{eq:dQdZ}
\frac{d\bbQ_{y}}{d\bbZ_{y}}(Y) 
\propto \exp\left\{-\int_0^T \phi(Y_t) dt\right\} \leq 1,\qquad \bbZ_y\text{-a.s.,}
\end{equation}
where
\[
\phi(u) := \frac{1}{2}[\alpha^2(u) + \alpha'(u)] - \inf_{z\in \bbR} \frac{1}{2}[\alpha^2(z) + \alpha'(z)]
\]
is well-defined by (A2, A3). 
As a consequence of \eref{eq:dQdZ}, if we can propose a candidate path $Y = (Y_t: t \in [0,T])$ from $\bbZ_{y}$ and accept it with probability $\exp\left\{-\int_0^T \phi(Y_t) dt\right\}$, then the accepted paths are distributed according to $\bbQ_{y}$. It remains to construct an event, call it $\Gamma$, occurring with the required probability. Inspection of the form of \eref{eq:dQdZ} suggests one way to define $\Gamma$: as the event that there are no points in the realization of a Poisson process occurring at unit rate in the area under the graph of $t \mapsto \phi(Y_t)$. A practical way to achieve this is to thin a Poisson process occurring in a larger rectangle containing this graph. More precisely, suppose there exists a random variable $\Upsilon$ and a positive function $r$ such that $r(\Upsilon) < \infty$ a.s.\ and $\sup_{t\in[0,T]}\phi(Y_t) \leq r(\Upsilon)$ a.s. Let $\Phi_{r(\gU)} = \{(\chi_j,\psi_j) : j\geq 1\}$ denote the points of a unit-rate Poisson process on $[0,T]\times [0,r(\Upsilon)]$, and let $\text{epi}[\phi(Y)]$ denote the epigraph of $t \mapsto \phi(Y_t)$:
\[
\text{epi}[\phi(Y)] := \{(s,u) \in [0,T] \times \bbR_+ : \phi(Y_s) \leq u\}.
\]
Then the event $\Gamma := \left\{\Phi_{r(\gU)} \subseteq \text{epi}[\phi(Y)]\right\}$ occurs with probability $\exp\left\{-\int_0^T \phi(Y_t) dt\right\}$, as required \citep{bes:etal:2008}.

The point of first finding an a.s.\ upper bound $r(\Upsilon)$ on $\phi(Y_t)$ is to ensure the Poisson process has finite total rate, and hence that $\Gamma$ can be determined with a finite amount of computation. Assuming such a bound can be found, an exact algorithm therefore proceeds as follows:\\

\begin{minipage}[h]{\textwidth}
{\bf Exact algorithm (EA)}

\begin{tabular}{lp{0.8\textwidth}}
\hline
1. & Simulate $Y_T \sim h$.\\
2. & Simulate $\gU$ conditionally on $Y_T$.\\
3. & Simulate $\Phi_{r(\gU)}$.\\
4. & Simulate $\{Y_{\chi_i} : 1 \leq i \leq |\Phi_{r(\gU)}\}|$ from $\bbZ_y \mid Y_T, \Upsilon$.\\
5. & If $\Gamma$ has occurred output $\cS := \{(0,y),(T,Y_T)\} \cup \{Y_{\chi_i} : 1 \leq i \leq |\Phi_{r(\gU)}|\}$, otherwise return to 1.\\
\hline
\end{tabular}
\end{minipage}
\\
\\

The output of EA is a set $\cS$ of random skeleton points. Once a skeleton has been accepted, further points can be filled in by sampling from Brownian bridges between each skeleton point (conditional on $\gU$). No further reference to the target distribution $\bbQ_y$ is necessary. 

It remains to find a suitable bound, $r(\gU)$, which must be defined in such a way that steps 2 and 4 can still be carried out. 
To construct the required bound, further regularity conditions on $\phi(Y)$ are required. These were weakened in a series of papers and lead to successively more general, though algorithmically more complicated, versions of the exact algorithm. Restrictions on $\phi$ are as follows.

\subsection{EA1 \citep{bes:rob:2005}}
Assume that 
\begin{itemize}
	\item[(*)] The function $\phi(u)$ is bounded above.
\end{itemize}
Then the non-random choice $r(\gU) := \sup_{u \in \bbR} \phi(u)$ suffices. Step 2 of EA can be omitted, and step 4 simplifies to simulating from the finite-dimensional distributions of a Brownian bridge.

\subsection{EA2 \citep{bes:etal:2006:B}}
\label{sec:EA2}
Assume that
\begin{itemize}
	\item[(**)] Either $\lim\sup_{u \to \infty} (\alpha^2 + \alpha')(u) < \infty$ or $\lim\sup_{u \to -\infty} (\alpha^2 + \alpha')(u) < \infty$.
\end{itemize}
Without loss of generality suppose the former: then $\phi$ is bounded on $[u,\infty)$ for any $u\in\bbR$. Hence, for $\gU = m_T := \inf_{t\in[0,T]}Y_t$ the choice $r(\gU) := \sup_{u\in[\gU,\infty)}\phi(u)$ provides the required bound. Step 2 then corresponds to the simulation of the minimum of a Brownian bridge, and step 4 corresponds to simulation from the finite-dimensional distributions of a Brownian bridge conditional on its minimum. These distributions are known in closed-form. In practice the minimum is simulated together with the (a.s.\ unique) time it is attained.
 
\subsection{EA3 \citep{bes:etal:2008}}
\label{sec:EA3}
This algorithm makes no further assumptions on $\phi(u)$. Intuitively, to relax (**) we would like to simulate both the minimum, $m_T$, and maximum, $M_T := \sup_{t\in[0,T]} Y_t$, of the candidate path, and then use 
$r(\gU) := \sup_{u\in[m_T,M_T]} \phi(u)$. This is not quite within reach, but a closely related idea \emph{is} feasible. We specify a partition of the state space of $Y_t$ and simulate the member of the partition into which the more `extreme' of $m_T$ and $M_T$ falls. More precisely, let $\{a_i\}_{i\geq 1}$ and $\{b_i\}_{i \geq 1}$ be two increasing sequences of positive real numbers with $a_0 = b_0 = 0$. Given $Y_0 = y$ and $Y_T = z$ with $\yb := y \wedge z$, $\zb := y \vee z$, define
\begin{equation}
\begin{split}
\label{eq:UI,LI}
U_i &:= \left\{M_T \in [\zb + b_{i-1}, \zb + b_i)\right\} \cap \left\{m_T > \yb - a_i\right\},\\
L_i &:= \left\{m_T \in (\yb - a_i, \yb - a_{i-1}]\right\} \cap \left\{M_T < \zb + b_i\right\},
\end{split}
\end{equation}
and $D_i := U_i \cup L_i$. The sets $\{D_i: i\geq 1\}$ partition the state space of $Y_t$ into layers. 
We may construct a discrete random variable, call it $I$, representing the choice of layer, so that $D_i = \{I = i\}$. Setting $\gU = I$, a suitable bound is then given by $r(\gU) := \sup\{\phi(u): u \in (\yb - a_i, \zb + b_i)\}$.


Step 2 of EA then requires simulation of the layer $I$ of $Y$, and Step 4 requires the simulation of points from $Y$ given its layer, each of which can be achieved; see \citet{bes:etal:2008} for details. 

The exact algorithms EA1--3 are extremely appealing because they do not suffer from discretization error, and are simple to implement. They enable candidate (biased) Brownian motions to be rejected with exactly the right probability so that accepted paths are distributed according to the target law. However, one disadvantage is that 
they offer limited control over the rejection probability. To the extent that Brownian motion may poorly resemble the target diffusion, the rejection rate could be high. As discussed in the Introduction, a diffusion with a finite entrance boundary is one example of this. In \sref{sec:Bessel-EA} I address this issue by replacing Brownian motion in the exact algorithm with the \emph{Bessel} process. First, let us briefly record some facts about the Bessel process for later use, which may be found for example in \citet{rev:yor:1999}.

\section{The Bessel process}
\label{sec:Bessel}
For any real $\gd \geq 0$, the Bessel process of dimension $\gd$ (equivalently, of order $\nu = (\gd - 2)/2$), commenced from $y$, is the diffusion $(\B_t : t \geq 0)$ with $\B_t$ taking values in $[0,\infty)$, $\B_0 = y$, with generator
\[
\cL = \frac{\gd - 1}{2x}\frac{d}{dx} + \frac{1}{2}\frac{d^2}{dx^2}.
\]
In the terminology of Feller \citep[\S 15.6]{kar:tay:1981}, the boundary $0$ is classified as
\[
\begin{cases}
\text{entrance (not exit)} & \text{if }\gd \geq 2,\\
\text{regular (entrance and exit)} & \text{if }0 < \gd < 2,\\
\text{exit (not entrance)} & \text{if }\gd = 0.
\end{cases}
\]
It is necessary to further specify the behaviour of a regular boundary; for the Bessel process with $0 < \gd < 2$ the boundary at $0$ is instantaneously reflecting. The boundary at $\infty$ is natural for all $\gd \geq 0$.

We will denote the probability measure on $\B = (\B_t : t \geq 0)$ by $\bbB^\gd_y$; we will also make use of the \emph{square} of a Bessel process, which is also a diffusion process and whose measure we denote $\bbB\bbQ^\gd_{y^2}$. 
It is well known that for $\gd \in \bbZ_+$ the Bessel process is the radial part of a Brownian motion in $\bbR^\gd$. It inherits a number of nice properties from Brownian motion, for \emph{all} $\gd \geq 0$, including Brownian scaling, bridge constructions, time reversal, and a known transition density: 
for $z > 0$ and $\gd > 0$ it is given by 
\begin{equation}
\label{eq:transitiondensity}
p^\gd_t(y,z) =
\begin{cases}
\ds\frac{1}{t}\frac{z^{\nu + 1}}{y^\nu}\exp\left(-\frac{y^2 + z^2}{2t}\right)I_\nu\left(\frac{yz}{t}\right), & y > 0,\\
\ds\frac{1}{2^\nu t^{\nu+1}\Gamma(\nu + 1)}z^{2\nu + 1}\exp\left(-\frac{z^2}{2t}\right), & y = 0.

\end{cases}
\end{equation}

An analogous set of results holds for $\bbB\bbQ^\gd_{y^2}$. 


\section{An exact algorithm using the Bessel process}
\label{sec:Bessel-EA}

I now develop an exact algorithm using the Bessel process to construct candidate paths, where our interest is in a target diffusion with a finite entrance boundary. By translating and reflecting $Y$ if necessary, there is no loss of generality in assuming it to be a lower boundary at $0$; we will obviously require $\bbQ_y \ll \bbB^\gd_y$, and in particular that its state space is 
$(0,\infty)$. In this section we assume for simplicity that $\infty$ is a natural boundary and that $y > 0$. 

Recall that the drift of our target, $\bbQ_y$, is $\ga(x)$ and the drift of the Bessel process, $\bbB^\gd_y$, is $\gb(x) := (\gd-1)/(2x)$. We fix a $\gd \geq 2$ to specify our candidate process. Following \citet{sch:etal:2013},
\begin{align}
\frac{d\bbQ_{y}}{d\bbB^\gd_{y}}(Y) &= \frac{d\bbQ_{y}}{d\bbW_{y}}(Y)\Big/\frac{d\bbB^\gd_{y}}{d\bbW_y}(Y) \label{eq:RDchain}\\
&= \exp\left\{\int_0^T [\alpha(Y_t)-\beta(Y_t)] dY_t - \frac{1}{2}\int_0^T [\alpha^2(Y_t) - \beta^2(Y_t)] dt\right\}.
\label{eq:girsanov2}
\end{align}
We replace (A1--4) with the following assumptions:
\begin{itemize}
	\item[(B1)] The Radon-Nikod\'ym derivative of $\bbQ_{y}$ with respect to $\bbB^\gd_{y}$ exists and is given by \eref{eq:girsanov2}.
	\item[(B2)] $\alpha \in C^1$ on $(0,\infty)$.
	\item[(B3)] $\alpha^2 - \beta^2 + \alpha' - \beta'$ is bounded below on $(0,\infty)$.
	\item[(B4)] $\widetilde{h}(u) = p_t^\gd(y,u)\exp\left\{\widetilde{A}(u)\right\}$ is integrable, where $\widetilde{A}(u) := \int_{u_0}^u [\ga-\gb](z) dz$ for some $u_0 > 0$.
\end{itemize}
Assumption (B1) is easily met for a diffusion with an entrance boundary and up to a finite time $T$ because we are then almost surely on the set $\{T_0 > T\}$, where $T_c := \inf\{t \geq 0: Y_t = c\}$ \citep[see][for discussion of the case $T = \infty$]{pit:yor:1981}. Equation \eref{eq:girsanov2} shows that we should try to choose $\gd$ to minimize this exponentiand, which, for a target diffusion whose drift has a singularity at 0, is typically achieved by selecting the $\gd$ to eliminate this singularity. 

\citet{bes:etal:2008} remark that (A4) is quite weak and is satisfied (for small $t$) by a linear growth bound on $\ga$. The same may be said of (B4) since the transition densities of both Brownian motion and the Bessel process contain a controlling term $\exp\{-u^2/(2t)\}$ as $u \to \infty$.

As in the Brownian case, we introduce a \emph{biased} Bessel process, denoted $\bbB\bbZ_y^\gd$ and defined via
\begin{equation}
\label{eq:biasedBessel}
\frac{d\bbB\bbZ^\gd_y}{d\bbB^\gd_y}(\B) \propto \exp\left\{\widetilde{A}(\B_T) - \widetilde{A}(y)\right\}.
\end{equation}
To simulate using $\bbB\bbZ_y^\gd$, we will first draw $\B_T \sim \widetilde{h}$ and then simulate the rest of the path using the dynamics of the corresponding bridge; this time we must simulate from a \emph{Bessel} bridge.

The main result of this paper, which enables exact simulation from $\bbQ_y$ using a (biased) Bessel process, is as follows.
\begin{theorem}
$\bbQ_y$ is the marginal distribution of $Y$ when $(Y, \Phi) \sim \bbB\bbZ_y^\gd \otimes \bbL \mid \widetilde{\Gamma}$, where $\bbL$ is the law of a Poisson point process of unit rate on $[0,T] \times [0,\infty)$, 
\[
\widetilde{\Gamma} := \left\{\Phi \subseteq \mathrm{epi}\left[\widetilde{\phi}(Y)\right]\right\},
\]
and
\[
\widetilde{\phi}(u) := \frac{1}{2}[\alpha^2(u) - \gb^2(u) + \alpha'(u) - \gb'(u)] - \inf_{z\in (0,\infty)} \frac{1}{2}[\alpha^2(z) - \gb^2(z) + \alpha'(z) - \gb'(z)].
\]
\end{theorem}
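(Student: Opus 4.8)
The plan is to mirror the Brownian-motion argument from Section~\ref{sec:EA}, but with $\bbB\bbZ_y^\gd$ in place of $\bbZ_y$ and $\widetilde{\phi}$ in place of $\phi$. The essential step is to establish the analogue of \eref{eq:dQdZ}, namely that
\[
\frac{d\bbQ_{y}}{d\bbB\bbZ^\gd_{y}}(Y) \propto \exp\left\{-\int_0^T \widetilde{\phi}(Y_t)\, dt\right\} \leq 1, \qquad \bbB\bbZ_y^\gd\text{-a.s.}
\]
First I would combine the Girsanov expression \eref{eq:girsanov2} with It\^o's lemma applied to $\widetilde{A}(u) = \int_{u_0}^u [\alpha-\beta](z)\, dz$, valid on $(0,\infty)$ by (B2). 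This rewrites the stochastic integral $\int_0^T [\alpha - \beta](Y_t)\, dY_t$ as $\widetilde{A}(Y_T) - \widetilde{A}(y) - \tfrac12\int_0^T [\alpha' - \beta'](Y_t)\, dt$, turning \eref{eq:girsanov2} into a product of an endpoint term $\exp\{\widetilde{A}(Y_T) - \widetilde{A}(y)\}$ and an integral term involving $\alpha^2 - \beta^2 + \alpha' - \beta'$. Dividing out the endpoint term is exactly the reweighting \eref{eq:biasedBessel} that defines $\bbB\bbZ_y^\gd$, so the Radon--Nikod\'ym derivative with respect to the \emph{biased} process collapses to $\exp\{-\tfrac12\int_0^T [\alpha^2 - \beta^2 + \alpha' - \beta'](Y_t)\, dt\}$ up to a normalizing constant. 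Subtracting the infimum, which exists and is finite by (B3), produces the claimed factor $\exp\{-\int_0^T \widetilde{\phi}(Y_t)\, dt\}$, and since $\widetilde{\phi} \geq 0$ the bound by $1$ is immediate.

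With this density in hand, the second step is the Poisson-thinning identity already quoted in the excerpt: for a unit-rate Poisson point process $\Phi$ on $[0,T]\times[0,\infty)$ independent of $Y$, the event $\widetilde{\Gamma} = \{\Phi \subseteq \mathrm{epi}[\widetilde{\phi}(Y)]\}$ has conditional probability, given the path $Y$, equal to $\exp\{-\int_0^T \widetilde{\phi}(Y_t)\, dt\}$, because the expected number of Poisson points falling strictly below the graph $t\mapsto \widetilde{\phi}(Y_t)$ is precisely $\int_0^T \widetilde{\phi}(Y_t)\, dt$ and a Poisson count is zero with probability $e^{-\text{mean}}$. Combining the two steps, for any bounded measurable test functional $g$ of the path,
\[
\bbE_{\bbB\bbZ_y^\gd \otimes \bbL}\!\left[g(Y)\,\mathbf{1}_{\widetilde{\Gamma}}\right]
= \bbE_{\bbB\bbZ_y^\gd}\!\left[g(Y)\exp\left\{-\int_0^T \widetilde{\phi}(Y_t)\, dt\right\}\right]
\propto \bbE_{\bbQ_y}[g(Y)],
\]
so the conditional law $\bbB\bbZ_y^\gd \otimes \bbL \mid \widetilde{\Gamma}$, marginalized over $\Phi$, is $\bbQ_y$ after normalization. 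Integrability of the biasing density, needed for $\bbB\bbZ_y^\gd$ to be a genuine probability measure and hence for the conditioning to make sense, is guaranteed by (B4).

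The main obstacle I anticipate is justifying the chain of Radon--Nikod\'ym derivatives \eref{eq:RDchain}--\eref{eq:girsanov2} and the use of Girsanov's theorem on the \emph{open} half-line $(0,\infty)$, where the drift $\beta(x) = (\gd-1)/(2x)$ is singular at the boundary. The subtlety is that $\bbB^\gd_y$ and $\bbW_y$ are mutually absolutely continuous only away from the boundary, and one must confirm that the target diffusion almost surely avoids $0$ up to time $T$ so that the singular drifts cause no problem. This is precisely where the entrance-boundary hypothesis enters: as noted after (B4), on $\{T_0 > T\}$ the path stays in $(0,\infty)$, and for an entrance boundary ($\gd \geq 2$) this event has full probability, legitimizing (B1) and the It\^o expansion of $\widetilde{A}$. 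The remaining technical care is in checking that the Bessel-bridge dynamics underlying $\bbB\bbZ_y^\gd$ are well-defined and that $\widetilde{\phi}$ is continuous on $(0,\infty)$ so that the epigraph construction is measurable; both follow from (B2) together with the known bridge constructions and transition density \eref{eq:transitiondensity} for the Bessel process recorded in Section~\ref{sec:Bessel}.
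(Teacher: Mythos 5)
Your proposal is correct and takes essentially the same route as the paper's own proof: It\^o's lemma applied to $\widetilde{A}$ to eliminate the stochastic integral in \eref{eq:girsanov2}, division by the biasing factor \eref{eq:biasedBessel} to identify $d\bbQ_y/d\bbB\bbZ_y^\gd$ with $\exp\{-\int_0^T \widetilde{\phi}(Y_t)\,dt\}$ up to a constant, followed by the Poisson-thinning identity and the conditioning/normalization step. The differences are purely expository—you spell out the thinning calculation and the boundary/Girsanov subtleties that the paper delegates to assumption (B1) and the surrounding remarks.
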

\begin{proof}
Note that $\widetilde{\phi}$ is well-defined and continuous by (B2, B3). Using \eref{eq:girsanov2} and It\^o's lemma applied to $\widetilde{A}(u)$, we obtain
\begin{equation*}
\frac{d\bbQ_{y}}{d\bbB^\gd_{y}}(Y) = \exp\left\{\widetilde{A}(Y_T) - \widetilde{A}(y) - \frac{1}{2}\int_0^T [\alpha^2-\gb^2 + \alpha' - \gb'](Y_t)dt\right\}.
\end{equation*}
Hence, using \eref{eq:biasedBessel},
\[
\frac{d\bbQ_y}{d\bbB\bbZ^\gd_y}(Y) = \frac{d\bbQ_y}{d\bbB^\gd_y}(Y)\Big/ \frac{d\bbB\bbZ^\gd_y}{d\bbB^\gd_y}(Y) \propto \exp\left\{-\int_0^T \widetilde{\phi}(Y_t) dt\right\} = \bbL(\widetilde{\Gamma}\mid Y),
\]
so, for a measurable set $A$ we have
\[
(\bbB\bbZ_y^\gd \otimes \bbL)(A\mid \widetilde{\Gamma}) = \int_A \frac{\bbL(\widetilde{\Gamma}\mid Y) d\bbB\bbZ_y^\gd (Y)}{(\bbB\bbZ_y^\gd \otimes \bbL)(\widetilde{\Gamma})} \propto \int_A \frac{d\bbQ_y}{d\bbB\bbZ^\gd_y}(Y) d\bbB\bbZ_y^\gd(Y) = \bbQ_y(A),
\]
as required.
\end{proof}
The key observation that
\[
\frac{d\bbQ_y}{d\bbB\bbZ^\gd_y}(Y) \propto \exp\left\{-\int_0^T \widetilde{\phi}(Y_t) dt\right\} \leq 1,\qquad \bbB\bbZ^\gd_y\text{-a.s.},
\]
provides the rejection probability underlying our algorithm. Practically, we need to simulate whether or not $\widetilde{\Gamma}$ has occurred. By the same argument as in the Brownian case, this can be simulated by checking that the points of a unit rate Poisson process on $[0,T] \times [0, r(\gU)]$ fall within the epigraph of $t \mapsto \widetilde{\phi}(Y_t)$, where $r(\gU)$ is a.s.\ finite and an a.s.\ upper bound on $\sup_{t\in[0,T]}\widetilde{\phi}(Y_t)$, and $\gU$ is a random variable which may depend on $Y$. Hence, an exact algorithm proceeds as follows:\\

\begin{minipage}[h]{\textwidth}
{\bf Exact algorithm---Bessel version (Bessel-EA)}

\begin{tabular}{lp{0.8\textwidth}}
\hline
1. & Simulate $Y_T \sim \widetilde{h}$.\\
2. & Simulate $\gU$ conditionally on $Y_T$.\\
3. & Simulate $\Phi_{r(\gU)}$.\\
4. & Simulate $\{Y_{\chi_i} : 1 \leq i \leq |\Phi_{r(\gU)}\}|$ from $\bbB^\gd_y \mid Y_T, \Upsilon$.\\
5. & If $\Gamma$ has occurred output $\cS := \{(0,y),(T,Y_T)\} \cup \{Y_{\chi_i} : 1 \leq i \leq |\Phi_{r(\gU)}|\}$, otherwise return to 1.\\
\hline
\end{tabular}
\end{minipage}
\\
\\

Given an accepted skeleton, further points can be filled in by simulation from Bessel bridges between each skeleton point (conditional on $\gU$). 
No further reference to $\bbQ_y$ is necessary.

It seems as though we have just replaced one proposal measure for another. The benefit of this operation is apparent when we notice that, for certain diffusions, the hypograph of $t \mapsto \widetilde{\phi}(Y_t)$ may be dramatically smaller than the hypograph of $t \mapsto \phi(Y_t)$, manifesting itself via a much lower rejection rate. For example, if our target drift is of the form
\begin{equation*}
\alpha(x) = \frac{\gd - 1}{2x} + o(1),
\end{equation*}
as $x \to 0$, then $\phi(x) \to\infty$ whereas $\widetilde{\phi}(x)$ remains finite. 
Thus, we should expect the rejection rate to be improved most when the effect of the boundary is strong and/or the target process spends a great deal of time near the boundary, for example if one end of the bridge is 
close to $0$. These observations are verified in an example application later. First, we must specify an appropriate choice of $\gU$ and $r$, and how to simulate from $\bbB^\gd_y \mid Y_T, \Upsilon$, 
which requires further assumptions on $\widetilde{\phi}$. 
For simplicity we will focus on the Bessel analogue of EA1, to be denoted Bessel-EA1, for which we assume
\begin{itemize}
	\item[(B*)] The function $\widetilde{\phi}(u)$ is bounded above.
\end{itemize}
Under this assumption we are entitled to choose the non-random $r(\gU) = \sup_{u \in (0,\infty)} \widetilde{\phi}(u)$ and to omit Step 2 of Bessel-EA. It remains to simulate skeleton points from the law of a Bessel bridge. 
I now detail how this can be achieved.

When $\gd \in \bbZ_+$ it is well known that the Bessel process is the radial part of a Brownian motion in $\bbR^\gd$. It is possible to use this observation to simulate from a Bessel bridge by transforming an underlying Brownian bridge in $\bbR^\gd$ \citep[see for example][]
{sch:etal:2013}. However, it is in fact possible to simulate exactly from the Bessel bridge for \emph{any} real $\gd \geq 0$, as follows. We first need a definition.

A random variable $W$ on $\bbN = \{0,1,\ldots\}$ is said to be \emph{Bessel$(\nu,x)$-distributed} when
\[
b_{\nu,x}(n) := \bbP(W = n) = \frac{(x/2)^{2n + \nu}}{n!\Gamma(\nu + n + 1)I_\nu(x)},\qquad x > 0,\ n\in \bbN.
\]
This distribution is constructed by normalizing the coefficients of
\[
I_\nu(x) = \sum_{n=0}^\infty \frac{(x/2)^{2n + \nu}}{n!\Gamma(\nu+n+1)}
\]
to sum to $1$. We define $b_{\nu,0}(n)$ as the continuous limit as $x\downarrow 0$; then $W = 0$ a.s. Because the Bessel distribution is discrete, a realization of $W$ can be achieved easily by the usual method of simulating a Uniform$[0,1]$ random variable and inverting the cumulative distribution function of $W$.

\begin{proposition}
\label{prop:mak:gle:2010}
Suppose $W \sim\text{Bessel}\left(\nu,\frac{\sqrt{yz}}{T}\right)$ and, independently,
\[
V \sim \text{Poisson}\left(\frac{1}{2}\left[y\frac{T-t}{tT} + z\frac{t}{T(T-t)}\right]\right),
\]
where $0 < t < T$ and $\gd = 2\nu + 2 > 0$. Then marginally under $\bbB\bbQ^\gd_{y\to z,T}$,
\[
\B^2_t \sim \text{Gamma}\left(V + 2W + \nu + 1, \frac{T}{2t(T-t)}\right),
\]
where $\text{Gamma}(S,R)$ is a Gamma distribution with shape $S$ and rate $R$.
\end{proposition}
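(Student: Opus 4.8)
The plan is to compute the one–dimensional marginal of the squared–Bessel bridge directly from the transition density and then recognise the resulting Gamma mixture as the stated Poisson--Bessel mixture. First I would use the Markov property to write the time-$t$ marginal under $\bbB\bbQ^\gd_{y\to z,T}$ as the normalised product of two squared–Bessel transition densities,
\[
\bbP(\B_t^2 \in dw) = \frac{q^\gd_t(y,w)\,q^\gd_{T-t}(w,z)}{q^\gd_T(y,z)}\,dw,
\]
where $q^\gd$ denotes the squared–Bessel transition density obtained from \eref{eq:transitiondensity} via the change of variables $w=\B^2$. Keeping only the $w$–dependence, the factors $w^{\nu/2}$ from the first density and $w^{-\nu/2}$ from the second cancel, leaving $\propto e^{-wR}\,I_\nu\!\bigl(\sqrt{yw}/t\bigr)\,I_\nu\!\bigl(\sqrt{wz}/(T-t)\bigr)$ with $R := T/(2t(T-t))$, precisely the rate appearing in the statement.

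Next I would expand the two modified Bessel functions using the defining series for $I_\nu$ recorded above. Each cross term carries a factor $w^{m+k+\nu}e^{-wR}$, i.e.\ an unnormalised $\mathrm{Gamma}(m+k+\nu+1,R)$ density, so $\B_t^2$ is exhibited as a mixture of $\mathrm{Gamma}(m+k+\nu+1,R)$ laws over a bivariate index $(m,k)$. Integrating out $w$ (which produces the factor $\Gamma(m+k+\nu+1)R^{-(m+k+\nu+1)}$) and absorbing all $w$- and $(m,k)$-independent constants, the mixing weight is proportional to
\[
\frac{\mu_A^{\,m}\mu_B^{\,k}}{m!\,k!}\,\frac{\Gamma(m+k+\nu+1)}{\Gamma(\nu+m+1)\,\Gamma(\nu+k+1)},\qquad \mu_A := \frac{y(T-t)}{2tT},\quad \mu_B := \frac{zt}{2(T-t)T}.
\]
One checks $\mu_A+\mu_B=\lambda$, the Poisson parameter in the statement, and $2\sqrt{\mu_A\mu_B}=\sqrt{yz}/T$, the argument of the Bessel distribution. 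It then remains to show that, under these weights, the shape excess $S:=m+k$ has the same law as $V+2W$.

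The cleanest way to finish is via probability generating functions. Writing $\Gamma(m+k+\nu+1)=\int_0^\infty r^{m+k+\nu}e^{-r}\,dr$ decouples the double sum; each inner sum collapses to a modified Bessel function, and the remaining integral is evaluated by Weber's second exponential integral $\int_0^\infty e^{-r}I_\nu(a\sqrt r)I_\nu(b\sqrt r)\,dr = e^{(a^2+b^2)/4}I_\nu(ab/2)$. This gives $\bbE[\xi^{S}] = e^{\lambda(\xi-1)}\,\xi^{-\nu}I_\nu(x\xi)/I_\nu(x)$ with $x=\sqrt{yz}/T$, which factors as the $\mathrm{Poisson}(\lambda)$ generating function times $\bbE[\xi^{2W}]=\xi^{-\nu}I_\nu(x\xi)/I_\nu(x)$, the latter being the generating function of twice a $\mathrm{Bessel}(\nu,x)$ variable (read off from its pmf). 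Hence $S\stackrel{d}{=}V+2W$ with $V,W$ independent, and the proposition follows. I expect the main obstacle to be exactly this last reorganisation: the ratio $\Gamma(m+k+\nu+1)/[\Gamma(\nu+m+1)\Gamma(\nu+k+1)]$ couples the two summation indices, so the mixture is \emph{not} a naive product of an independent Poisson and an independent Bessel, and collapsing it genuinely requires the Bessel-product integral above. Some extra care is also needed at the degenerate endpoints $y=0$ or $z=0$ (where one Bessel series reduces to its leading term and $W=0$ almost surely) and for small $\gd$, to justify interchanging summation and integration.
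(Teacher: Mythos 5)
Your proof is correct, but it cannot be compared step for step with the paper's, because the paper gives no argument at all: its proof of Proposition~\ref{prop:mak:gle:2010} is the single line ``See \citet{mak:gle:2010}.'' Your derivation is thus a self-contained replacement for that citation, and every step checks out. The factorization of the squared-Bessel bridge marginal via the Markov property, the cancellation of the $w^{\pm\nu/2}$ factors, and the resulting rate $R=T/(2t(T-t))$ are all right; the mixture weights $\propto \mu_A^m\mu_B^k\,\Gamma(m+k+\nu+1)\big/\bigl[m!\,k!\,\Gamma(\nu+m+1)\Gamma(\nu+k+1)\bigr]$ with $\mu_A=y(T-t)/(2tT)$, $\mu_B=zt/(2T(T-t))$ are exactly what the expansion gives, and indeed $\mu_A+\mu_B$ is the stated Poisson rate while $2\sqrt{\mu_A\mu_B}=\sqrt{yz}/T$ is the Bessel argument. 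The finishing move is also sound: writing $\Gamma(m+k+\nu+1)=\int_0^\infty r^{m+k+\nu}e^{-r}\,dr$ decouples the sums into $(\mu_A\xi r)^{-\nu/2}I_\nu(2\sqrt{\mu_A\xi r})$ times the analogous $\mu_B$ factor, Weber's second exponential integral collapses the $r$-integral to $e^{\lambda\xi}I_\nu(x\xi)$ up to constants, and the resulting generating function $e^{\lambda(\xi-1)}\,\xi^{-\nu}I_\nu(x\xi)/I_\nu(x)$ factors as the Poisson$(\lambda)$ PGF times the PGF of $2W$ for $W\sim\mathrm{Bessel}(\nu,x)$; all interchanges are justified by nonnegativity (Tonelli), and Weber's formula is valid precisely for $\nu>-1$, i.e.\ $\delta>0$, matching the hypothesis of the proposition. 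Your closing cautions are well placed: the $\Gamma(m+k+\nu+1)$ coupling is exactly why the mixture is not a naive product law and why the Bessel-product integral is genuinely needed, and at $y=0$ (or $z=0$) your argument degenerates correctly --- one series reduces to a single term, the marginal becomes a pure Poisson--Gamma mixture, consistent with the paper's convention that $b_{\nu,0}$ puts all mass at $W=0$. In short, the paper buys brevity by outsourcing the proof to \citet{mak:gle:2010}; your argument buys self-containedness, and is (up to presentation) the kind of direct density computation that reference performs, with the PGF-plus-Weber step being the cleanest way to make the identification of the mixing law rigorous.
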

\begin{proof}
See \citet{mak:gle:2010}.
\end{proof}
Using the Markov property this result can be applied repeatedly to simulate a set of skeleton points $\{(\chi_i,\B_{\chi_i}): 1 \leq i \leq |\Phi_{r(\gU)}|\}$. Furthermore, the rest of the path can be filled in by the following theorem.

\begin{theorem}
Let $\bbB\bbQ^\gd_{(s_1,y_1)\to (s_2,y_2)}$ denote the law of a squared Bessel bridge commencing from $y_1$ at time $s_1$ and ending at $y_2$ at time $s_2$. If $\{s_i: 0 \leq i \leq |\Phi_{r(\gU)}| + 1\}$ is an increasing ordering of the times of the points in $\cS$ output from Bessel-EA1 under (B1--4) and (B*), then the rest of the path, $\B = \{\B_t: t \in [0,T]\}$ is distributed as
\[
\B^2 \mid \cS, \widetilde{\Gamma} \sim \bigotimes_{i=1}^{|\Phi_{r(\gU)}|+1} \bbB\bbQ^\gd_{(s_{i-1},\B^2_{s_{i-1}})\to (s_i,\B^2_{s_i})},
\]
where $\B^2 = \{\B^2_t: t \in [0,T]\}$.
\end{theorem}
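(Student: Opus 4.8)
The plan is to show that \emph{neither} conditioning---knowing the skeleton $\cS$, nor knowing that the acceptance event $\widetilde{\Gamma}$ occurred---places any constraint on the candidate path \emph{between} consecutive skeleton points, so that the fill-in reduces to the Markov property of the (squared) Bessel bridge. The single observation that drives everything is that $\widetilde{\Gamma}$ depends on the proposed path only through its values at the skeleton times. Writing the Poisson points as $\Phi_{r(\gU)} = \{(\chi_j,\psi_j): j \geq 1\}$, the epigraph condition reads
\[
\widetilde{\Gamma} = \left\{\psi_j \geq \widetilde{\phi}(\B_{\chi_j}) \text{ for all } j\right\},
\]
so $\widetilde{\Gamma}$ is measurable with respect to $\Phi_{r(\gU)}$ together with the values $\{\B_{\chi_j}\}$, all of which are encoded in $\cS$. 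Hence $\widetilde{\Gamma}$ is conditionally independent of the remaining path given $\cS$, and the conditional law of the path given $(\cS,\widetilde{\Gamma})$ coincides with its conditional law given $\cS$ alone, computed under the proposal $\bbB\bbZ_y^\gd$.

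It then remains to identify this latter conditional law. First I would dispose of the biasing: by \eref{eq:biasedBessel} the Radon--Nikod\'ym derivative $d\bbB\bbZ_y^\gd/d\bbB_y^\gd$ is a function of the endpoint $\B_T$ alone. Therefore, conditional on $\B_T = Y_T$ (recorded in $\cS$ as the point $(T,Y_T)$), the biased law $\bbB\bbZ_y^\gd$ and the ordinary Bessel law $\bbB_y^\gd$ agree, and the candidate is distributed as an unbiased Bessel bridge from $y$ at time $0$ to $Y_T$ at time $T$. The biasing thus plays no role in the fill-in.

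Next I would invoke the Markov property. Both the Bessel process and its square are diffusions, hence Markov, as are the corresponding bridges; and the square map $u\mapsto u^2$ is a bijection of $[0,\infty)$ carrying the law of a Bessel bridge to that of a squared Bessel bridge $\bbB\bbQ^\gd$. Ordering the skeleton times as $0 = s_0 < s_1 < \cdots < s_{|\Phi_{r(\gU)}|} < s_{|\Phi_{r(\gU)}|+1} = T$ and conditioning the squared Bessel bridge on its intermediate values $\{\B^2_{s_i}\}$, the Markov property factorizes the law of $\B^2$ into the independent product
\[
\bigotimes_{i=1}^{|\Phi_{r(\gU)}|+1} \bbB\bbQ^\gd_{(s_{i-1},\B^2_{s_{i-1}})\to (s_i,\B^2_{s_i})},
\]
which is exactly the claimed decomposition. (For Bessel-EA1 the variable $\gU$ is degenerate, since $r(\gU)$ is the non-random constant $\sup_u \widetilde{\phi}(u)$, so no extra conditioning on $\gU$ enters.)

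I expect the only real obstacle to be bookkeeping rather than substance. The substantive content is the conditional-independence observation above; the care required is to make the disintegration rigorous---working with regular conditional distributions, noting that the skeleton times $\{\chi_j\}$ are random but become fixed once $\Phi_{r(\gU)}$ is given, and checking that the relevant conditioning carries positive mass so the decomposition is legitimate. Once these are in place, the factorization is an immediate consequence of the Markov property.
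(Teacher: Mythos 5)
Your proof is correct and takes essentially the same route as the paper: the paper's entire proof consists of the observation that $\B \mid \cS, \widetilde{\Gamma} \stackrel{d}{=} \B \mid \cS$ --- precisely your conditional-independence step, since $\widetilde{\Gamma}$ depends on the path only through its values at the skeleton times --- with the remaining de-biasing and bridge factorization delegated to the analogy with Theorem 2 of \citet{bes:etal:2008}. Your write-up simply spells out explicitly those steps (the Radon--Nikod\'ym derivative depending only on $\B_T$, and the Markov property of the squared Bessel bridge) that the paper leaves to the citation.
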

\begin{proof}
This is analogous to Theorem 2 of \citet{bes:etal:2008}, and follows from the observation that $\ggg \mid \cS, \widetilde{\Gamma} \stackrel{d}{=} \ggg \mid \cS$.
\end{proof}

\section{Applications}
\label{sec:applications}
\subsection{Conditioned diffusions}
\label{sec:conditioned}
A rich source of processes to which the above theory applies can be constructed as follows. Suppose we have a diffusion satisfying \eref{eq:ySDE}, (A1--4), and (*), i.e.\ it can be simulated by EA1. Then the diffusion $Y^*$ obtained by conditioning this process on $\{T_b < T_0\}$, i.e.\ on reaching some high level $b$ before hitting $0$, has new drift and diffusion coefficients given by
\begin{equation}
\label{eq:conditioned}
	\ga^*(Y_t) = \ga(Y_t) + \frac{S'(Y_t)}{S(Y_t) - S(0)},\qquad
	(\gs^*)^2(Y_t) = 1,
\end{equation}
where
\[
S(y) = \int^y \exp\left[-2\int^x \ga(z) dz\right] dx
\]
is the scale function of the diffusion \citep[\S 15.9]{kar:tay:1981}. Denote the law of this diffusion by $\bbQ^*_y$. For such a process, we have the following result.
\begin{theorem}
\label{thm:conditioned}
	With the choice $\gd = 3$, the process $Y^*$ satisfies (B1--4) and (B*).
\end{theorem}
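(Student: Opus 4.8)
The plan is to reduce the entire statement to a single algebraic identity that is special to $\gd = 3$. The starting observation is that the Bessel drift $\gb(x) = (\gd-1)/(2x)$ satisfies $\gb^2 + \gb' = (\gd-1)(\gd-3)/(4x^2)$, which vanishes identically precisely when $\gd = 3$ (then $\gb = 1/x$, $\gb^2 = 1/x^2$, $\gb' = -1/x^2$). Consequently the combination appearing in (B3), (B$*$) and the definition of $\widetilde{\phi}$ collapses to $(\ga^*)^2 - \gb^2 + (\ga^*)' - \gb' = (\ga^*)^2 + (\ga^*)'$ on $(0,\infty)$, and the whole theorem rests on understanding this quantity for the conditioned drift $\ga^*$ in \eref{eq:conditioned}.

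First I would write $\ga^* = \ga + R$ with $R := S'/(S - S(0))$, the logarithmic derivative of $S - S(0)$, and record two elementary facts: differentiating gives $R' + R^2 = S''/(S - S(0))$, while the defining relation of the scale function, $S'' = -2\ga S'$, yields $S''/(S-S(0)) = -2\ga R$. Combining them gives $R' + R^2 = -2\ga R$. Substituting $\ga^* = \ga + R$ then produces the key identity
\[
(\ga^*)^2 + (\ga^*)' = \ga^2 + \ga' + \bigl(2\ga R + R^2 + R'\bigr) = \ga^2 + \ga',
\]
since the bracketed term vanishes. Thus for $\gd = 3$ the function governing (B3) and (B$*$) is literally $\ga^2 + \ga'$, and in fact $\widetilde{\phi}$ for $Y^*$ coincides with the EA1 function $\phi$ for the original process $Y$.

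With this identity in hand, (B2), (B3) and (B$*$) are immediate. For (B2), $\ga \in C^1$ by (A2) and $R \in C^1$ on $(0,\infty)$ because $S - S(0) > 0$ there and $S$ is $C^2$ (as $S'' = -2\ga S'$ is continuous), so $\ga^* = \ga + R \in C^1$ on $(0,\infty)$. For (B3) and (B$*$), the identity shows $(\ga^*)^2 - \gb^2 + (\ga^*)' - \gb' = \ga^2 + \ga'$, which is bounded below by (A3) and bounded above because $Y$ satisfies (*); since $\widetilde{\phi} = \tfrac12(\ga^2+\ga') - \inf\tfrac12(\ga^2+\ga')$, boundedness above of $\ga^2 + \ga'$ is exactly (B$*$).

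It remains to dispatch the two measure-theoretic conditions, which I expect to be the more delicate part. For (B1) I would argue that $0$ is an entrance boundary for $Y^*$: from $S(x) - S(0) = \int_0^x S'(u)\,du \sim S'(0)\,x$ as $x\downarrow 0$ one gets $R(x) \sim 1/x$, so $\ga^*(x) = 1/x + o(1)$, matching $\gb$ for $\gd = 3$, and $Y^*$ inherits the Feller classification of the Bessel$(3)$ process at $0$ (equivalently, the $h$-transform by $h(x) = \bbP_x(T_b < T_0)$ renders $0$ inaccessible). Hence $Y^*$ is a.s.\ on $\{T_0 > T\}$ and (B1) follows by the general argument given after the statement of (B1--4). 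For (B4) I would note that $\ga^* - \gb = \ga + (R - 1/x)$ is bounded near $0$ (the singular parts cancel, so there is no difficulty at the lower limit) and grows at most linearly at $\infty$ under the standing linear-growth hypothesis on $\ga$, so $\widetilde{A}(u)$ is at most quadratic and is dominated by the Gaussian factor $\exp(-u^2/2t)$ in $p^\gd_t(y,u)$, exactly as in the general remark on (B4). The only genuine subtlety is the boundary analysis near $0$ underlying (B1)/(B4), for which the asymptotics $R(x) \sim 1/x$ must be justified; the rest is the clean cancellation above.
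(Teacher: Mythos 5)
Your core computation is the paper's own proof, made explicit. The paper's argument for (B3) and (B*) is a single ``direct calculation'': $[(\ga^*)^2 - \gb^2 + (\ga^*)' - \gb'](u) = \ga^2(u) + \ga'(u) + (\gd-3)(\gd-1)/(4u^2)$, which is bounded above and below by (*) and (A3) once $\gd = 3$. Your two identities --- $\gb^2 + \gb' = (\gd-1)(\gd-3)/(4u^2)$, and $(\ga^*)^2 + (\ga^*)' = \ga^2 + \ga'$ via $R' + R^2 = S''/(S-S(0)) = -2\ga R$ --- are exactly that calculation (and your sign on the $1/u^2$ term is in fact the correct one; the paper's displayed ``$+$'' is a slip, immaterial at $\gd = 3$). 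Your treatment of (B2) matches the paper's. For (B1) you take a different but legitimate route: the paper cites (A1) together with $\bbQ^*_y \ll \bbQ_y$, while you verify that $0$ is an entrance boundary for $Y^*$ (via $\ga^*(x) = 1/x + o(1)$) and invoke the paper's general remark that (B1) holds for an entrance boundary up to a finite horizon; this buys a direct check of the hypothesis under which \eref{eq:girsanov2} was justified, at the cost of the asymptotics of $R$ at $0$, which you correctly flag as the point needing justification.

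The one genuine gap is (B4). You appeal to a ``standing linear-growth hypothesis on $\ga$'', but no such hypothesis exists: linear growth is mentioned in the paper only as a \emph{sufficient} condition for (A4)/(B4), not as an assumption, and the theorem asks for (B4) to be deduced from (A1--4) and (*) alone (the paper's own proof says, tersely, ``(B4) follows from (A4)''). The gap is closable with what you already have. Under (A2), (A3) and (*) there is an $M$ with $\ga^2 + \ga' \le M$ on $\bbR$; a Riccati comparison then forces $|\ga| \le \sqrt{M}$ everywhere (if $\ga(z_0) < -\sqrt{M}$, the inequality $\ga' \le M - \ga^2$ drives $\ga$ to $-\infty$ at a finite point, contradicting (A2); the case $\ga(z_0) > \sqrt{M}$ is symmetric under $z \mapsto -z$). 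Boundedness of $\ga$ gives $S'(u)/S'(v) \le e^{2\sqrt{M}|u-v|}$, hence $R(z) = S'(z)/\int_0^z S'(u)\,du \le e^{2\sqrt{M}}$ for $z \ge 1$; combined with your near-zero cancellation ($R - 1/z$ bounded near $0$), this makes $\ga^* - \gb$ bounded on all of $(0,\infty)$, so $\widetilde{A}$ grows at most linearly and the Gaussian factor in $p^\gd_T(y,\cdot)$ renders $\widetilde{h}$ integrable. With that replacement your proof is complete and, apart from the (B1) route, coincides with the paper's.
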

\begin{proof}
	(B1) follows immediately from (A1) and the observation that $\bbQ^*_y \ll \bbQ_y$; (B2) follows from (A2) and continuity of $S$ and $S'$; and (B4) follows from (A4). It remains to check that the function $(\ga^*)^2 - \gb^2 + (\ga^*)' - \gb'$ is bounded on $(0,\infty)$. A direct calculation shows that
\[
[(\ga^*)^2 - \gb^2 + (\ga^*)' - \gb'](u) = \ga^2(u) + \ga'(u) + \frac{(\gd - 3)(\gd - 1)}{4u^2},
\]
which is bounded by (A3) and (*) when $\gd = 3$.
\end{proof}
Thus, using a $\bbB_y^3$ process builds the conditioning into every candidate path. An alternative strategy using Brownian paths would be first to simulate the minimum of a $\bbW_y$ process conditioned on its minimum being positive, and then to simulate the requisite bridges either side of this minimum. This strategy is closely related to that of \citet{bes:etal:2006:B}, and since these bridges are also in fact Bessel bridges of dimension 3, the two strategies are expected to have very similar performance.

\subsection{A population growth model}
Consider the general population growth model $(X_t: t \geq 0)$, $X_t \in [0,\infty)$, with generator
\begin{equation}
\label{eq:growthgen}
\cL = \gk x \frac{d}{dx} + \frac{1}{2}(\gt x + \go x^2)\frac{d^2}{dx^2},
\end{equation}
where $-\infty < \gk < \infty$, $\gt \geq 0$, $\go \geq 0$, and $\gt + \go > 0$ \citep[p378]{kar:tay:1981}. This model includes as special cases the commonly used growth models of geometric Brownian motion ($\gt = 0$) and the squared Bessel process ($\go = 0$). In the latter case our algorithm would have optimal performance, with rejection rate 0.

Ascertainment bias directs interest to the growth trajectories of populations observed to have grown successfully; such populations follow the diffusion \eref{eq:growthgen} conditioned as in \sref{sec:conditioned}. Here, \thmref{thm:conditioned} does not apply because the unconditioned diffusion does not satisfy (A2); however, we can still perform exact simulation using a Bessel process, as follows. We assume $\go \neq 2\gk$ and $\gk > 0$ (other cases are similar but omitted for brevity).

Applying the Lamperti transform \eref{eq:lamperti} to \eref{eq:growthgen} and conditioning via \eref{eq:conditioned} yields, after extensive but routine calculations, a diffusion on $(\ln \gt/\sqrt{w},\infty)$ with drift and diffusion coefficients
\begin{align}
\ga(y) ={} & \frac{\gk}{\sqrt{\go}}\tanh\left[\frac{\sqrt{\go}y - \ln \gt}{2}\right] - \frac{\sqrt{\go}}{2}\coth\left[\sqrt{\go}y - \ln \gt\right]\notag\\
& {}+ \frac{\go - 2\gk}{\sqrt{w}}\frac{\tanh\left[\frac{\sqrt{\go}y - \ln\gt}{2}\right]}{1 - \cosh^{\frac{4\gk}{\go} - 2}\left[\frac{\sqrt{\go}y - \ln\gt}{2}\right]},\label{eq:growthdrift}\\
\sigma^2(y) ={} & 1,\notag
\end{align}
This diffusion has an entrance boundary at $\ln\gt/\sqrt{w}$, so we introduce $z = y - \ln\gt/\sqrt{w}$. A Taylor expansion of \eref{eq:growthdrift} about $z = 0$ shows that
\begin{equation}
\label{eq:growthdriftlimit}
\ga(z) = \frac{3}{2z} + O(z)
\end{equation}
as $z \to 0$, so that a suitable candidate diffusion is a Bessel process of dimension $\gd = 4$. It is straightforward to verify that the assumptions of Bessel-EA1 apply. In particular,
\[
[\ga^2 + -\gb^2 + \ga' - \gb'](z) = \frac{(\go-2\gk)^2}{4\go} + \frac{3\go + 8\gk(\cosh(\sqrt{\go}z) - 1)}{4\sinh^2(\sqrt{\go}z)}- \frac{\gk^2}{\go\cosh^2(\sqrt{\go}z/2)} - \frac{3}{4z^2},
\]
which we can show to be bounded on $(0,\infty)$ as follows. Note that, since $\sinh x \geq x$ and $\cosh x \geq 1$ when $x \geq 0$,
\[
\frac{3\go + 8\gk(\cosh(\sqrt{\go}z) - 1)}{4\sinh^2(\sqrt{\go}z)} - \frac{3}{4z^2} \leq 2\gk\frac{\cosh(\sqrt{\go}z) - 1}{\sinh^2(\sqrt{\go}z)} = 2\gk\frac{\cosh(\sqrt{\go}z) - 1}{\cosh^2(\sqrt{\go}z) - 1} \leq 2\gk.
\]
Hence,
\[
[\ga^2 + -\gb^2 + \ga' - \gb'](z) \leq \frac{(\go-2\gk)^2}{4\go} + 2\gk.
\]
Similarly, writing
\[
f(z) := \frac{3}{4}\left[\frac{\go}{\sinh^2(\sqrt{\go}z)} - \frac{1}{z^2}\right],
\]
we find
\[
f'(z) = \frac{3}{2}\left[\frac{1}{z^3} - \frac{\go^{3/2}\cosh(\sqrt{\go}z)}{\sinh^3(\sqrt{\go}z)}\right] \geq 0,
\]
where the inequality follows from $\sinh(x)/x \geq \cosh^{1/3}(x)$ \citep{bul:1998}. Hence, $f(z) \geq f(0+) = -\go/4$, and so
\[
[\ga^2 + -\gb^2 + \ga' - \gb'](z) \geq \frac{(\go-2\gk)^2}{4\go} - \frac{w}{4} + 2\gk\frac{\cosh(\sqrt{\go}z) - 1}{\sinh^2(\sqrt{\go}z)} \geq -\gk.
\]
I applied Bessel-EA1 to simulate sample paths of this diffusion, from $Y_0 = y$ to $Y_T = z$. For comparison I also implemented a slight modification of EA2 as discussed in \sref{sec:conditioned}, which uses Brownian motion as its candidate paths: In order to bound $\phi(Y_t)$, the EA2 algorithm first simulates the minimum of a Brownian bridge (and to respect the lower boundary I further condition it to be positive) and then fills in skeleton points conditioned on this minimum (see \sref{sec:EA2}). To compare the two algorithms, I used two measurements of running time: the total number of random variates generated, and the total running time in seconds. Each algorithm requires 2$|\Phi_{r(\gU)}|+1$ realizations of random variables to simulate $\Phi_{r(\gU)}$. Additionally, each skeleton point requires the simulation of a coordinate from a Bessel(4)- (Bessel-EA1) or Bessel(3)- (EA2) bridge, requiring four or three random variables respectively. EA2 suffers an additional one-off cost of 6 random variables in order to simulate $\gU$. Thus, EA2 requires up to $5|\Phi_{r(\gU)}| +7$ realizations per candidate path (whether accepted or not), while Bessel-EA1 requires up to $6|\Phi_{r(\gU)}| + 1$. The relative performance of the two algorithms therefore depends on three factors: (i) $|\Phi_{r(\gU)}|$, which in turn depends on $r(\gU)$, the size of the bound on $\phi$ or $\widetilde{\phi}$; (ii) the rejection rate, which depends on the distribution of $(\phi(Y_t): t\in [0,T])$ [or $(\widetilde{\phi}(Y_t): t\in [0,T])$]; and (iii) the cost of generating each variate. For both algorithms all random variables come from well known distributional families and are easy to simulate. 

\begin{table}[p]
\caption{\label{tab:KTgrowth}Performance comparison of (a) Bessel-EA1 versus (b) EA2, applied to a bridge of the diffusion \eref{eq:growthdrift} from $Y_0 = y$ to $Y_{0.15} = 1$. Each quoted value (except the total running time) is per accepted sample path, averaged across 10,000 accepted paths. In all simulations the remaining parameters are $\go = 3$, $\tau = 1$. Entries marked `--' could not be completed.}
\begin{center}
\begin{tabular}{r@{.}l r@{.}l r@{.}l r@{.}l r@{.}l r@{.}l r}
\multicolumn{13}{c}{(a) Bessel-EA1}\\
\hline
\multicolumn{2}{c}{} & \multicolumn{2}{c}{} & \multicolumn{2}{c}{} 
 & \multicolumn{2}{r}{Poisson} & \multicolumn{2}{r}{Skeleton} & \multicolumn{2}{c}{Random} & Total\phantom{ (s)} \\
\multicolumn{2}{c}{$\gk$} & \multicolumn{2}{c}{$y$} & \multicolumn{2}{c}{Attempts} 
 & \multicolumn{2}{r}{points} & \multicolumn{2}{r}{points} &  \multicolumn{2}{r}{variables} & Time (s)\\
\hline
1&0  & 10&0 &              1&1 &             0&2 &             0&2 &             1&9 &               0 \\
1&0  & 1&0 & 1&0 &             0&2 &             0&2 &             1&9 &               0 \\
1&0  & 0&5 &    1&0 &             0&2 &             0&2 &             1&9 &               0 \\
1&0  & 0&25 & 1&0 &             0&2 &             0&2 &             2&0 &               0 \\
1&0  & 0&15 &1&0 &             0&2 &             0&2 &             2&0 &               1 \\
1&0  & 0&1 & 1&1 &         0&2 &               0&2 &          2&0 &               1 \\
1&0  & 0&025 &1&0 &           0&2 &          0&2 &          2&0 &               0 \\

10&0  & 10&0 &             5&2 &            14&1 &             6&8 &            56&4 &               1 \\
10&0  & 1&0 &             3&0 &             7&9 &             4&9 &            36&4 &               1 \\
10&0  & 0&5 &             2&4 &             6&6 &             4&5 &            32&3 &               1 \\
10&0  & 0&25 &             2&3 &             6&1 &             4&4 &            30&8 &               1 \\
10&0  & 0&15 &             2&2 &             6&0 &             4&3 &            30&3 &               0 \\
10&0  & 0&1 &          2&2 &                         5&9 &                        4&4 &         30&4 &               0 \\
10&0  & 0&025 &          2&1 &                         5&8 &                        4&3 &         29&6 &               1 \\

\hline
\multicolumn{13}{c}{}\\
\multicolumn{13}{c}{(b) EA2}\\

\hline
\multicolumn{2}{c}{} & \multicolumn{2}{c}{} & \multicolumn{2}{c}{} 
 & \multicolumn{2}{r}{Poisson} & \multicolumn{2}{r}{Skeleton} & \multicolumn{2}{c}{Random}  & Total\phantom{ (s)} \\
\multicolumn{2}{c}{$\gk$} & \multicolumn{2}{c}{$y$} & \multicolumn{2}{c}{Attempts} 
 & \multicolumn{2}{r}{points} & \multicolumn{2}{r}{points} &  \multicolumn{2}{r}{variables} & Time (s)\\
\hline
1&0  & 10&0 &  1&0 &             0&1 &             0&1 &             7&3 &               0 \\
1&0  & 1&0 &  1&1 &             0&1 &             0&1 &             7&4 &               0 \\
1&0  & 0&5 & 1&1 &             0&3 &             0&3 &             8&5 &               0 \\
1&0  & 0&25 & 1&2 &          1288&6 &           420&6 &          3846&1 &               6 \\
1&0  & 0&15 & 1&4 &          7531&1 &           617&4 &         16921&4 &              16 \\
1&0  & 0&1 &    \multicolumn{2}{c}{--}  &                     \multicolumn{2}{c}{--}  &                        \multicolumn{2}{c}{--}  &            \multicolumn{2}{c}{--}  &              --  \\
1&0  & 0&025 &    \multicolumn{2}{c}{--}  &                     \multicolumn{2}{c}{--}  &                        \multicolumn{2}{c}{--}  &            \multicolumn{2}{c}{--}  &              --  \\

10&0  & 10&0 &             5&0 &             9&8 &             4&8 &            40&9 &               0 \\
10&0  & 1&0 &             2&9 &             5&9 &             3&6 &            29&8 &               0 \\
10&0  & 0&5 &             2&6 &             6&1 &             4&0 &            31&0 &               0 \\
10&0  & 0&25 &             2&6 &            81&4 &            10&7 &           201&9 &               0 \\
10&0  & 0&15 &             2&9 &         23052&1 &          1981&9 &         52056&9 &              52 \\
10&0  & 0&1 & \multicolumn{2}{c}{--}  &                     \multicolumn{2}{c}{--}  &                        \multicolumn{2}{c}{--}  &            \multicolumn{2}{c}{--}  &              --  \\
10&0  & 0&025 & \multicolumn{2}{c}{--}  &                     \multicolumn{2}{c}{--}  &                        \multicolumn{2}{c}{--}  &            \multicolumn{2}{c}{--}  &              --  \\

\hline
\end{tabular}
\end{center}
\end{table}

I simulated $10,000$ bridges for each of various combinations of model parameters $\gk, \go$ and initial position $y$. Reported in \tref{tab:KTgrowth} are the mean number, per accepted path, of: initiated attempts, Poisson points ($|\Phi_{r(\gU)}|$), skeleton points, and total number of r.v.s. The total running time (in seconds) is also shown. (The number of skeleton points can differ from the number of Poisson points because a candidate can be abandoned as soon as a point is found to fall outside the relevant epigraph.) As is clear from the table, Bessel-EA1 is only moderately sensitive to the parameters and proximity to the boundary, with around 1--5 attempts needed per accepted path. Where paths are unlikely to approach the boundary, such as when both endpoints are sufficiently far from 0, EA2 exhibits a similar performance both in the number of attempts per accepted path and the total running time. However, EA2 is highly sensitive to the proximity of $y$ to $0$, with the number of r.v.s required increasing rapidly as $y$ approaches $0$. This has serious implications for the memory requirements of the algorithm. While each execution of Bessel-EA1 required roughly 1MB, the requirements of EA2 quickly exceeded 6GB for $y \leq 0.1$, rendering the algorithm impracticable (at least without further optimizations). This is unfortunate because it precludes simulation of the trajectories of newly founded (i.e.\ initially small) populations. For both algorithms, increasing $\gk$ causes a moderate drop in performance because of a consequent increase in $r(\gU)$. Varying $\go$ had less effect (results not shown).

\section{Two finite entrance boundaries}
\label{sec:twoboundaries}
Many diffusions evolve on some interval $[a,b]$ with both $a$ and $b$ finite. Assuming $a$ to be an entrance boundary, the theory developed in this paper will apply only in the exceptional case that $\widetilde{\phi}(u)$ remains finite as $u \to b$. This will not typically hold if $b$ is an exit or an entrance boundary, for example, where we would expect $\ga(y) \to \pm\infty$ as $y \to b$; see, for example, the Wright-Fisher diffusion studied by \citet{sch:etal:2013}. While there is currently no solution available if we want to retain the Bessel process to simulate our candidate paths, one can make progress provided we revert to using Brownian motion. In this case we cannot make any assumptions on $[\ga^2 + \ga'](y)$ either as $y \to a$ or as $y \to b$; nonetheless, we can still use EA3 \citep[see \sref{sec:EA3} above and the Discussion in][]{bes:etal:2008}.

\begin{figure}[t]
\includegraphics[scale = 0.7]{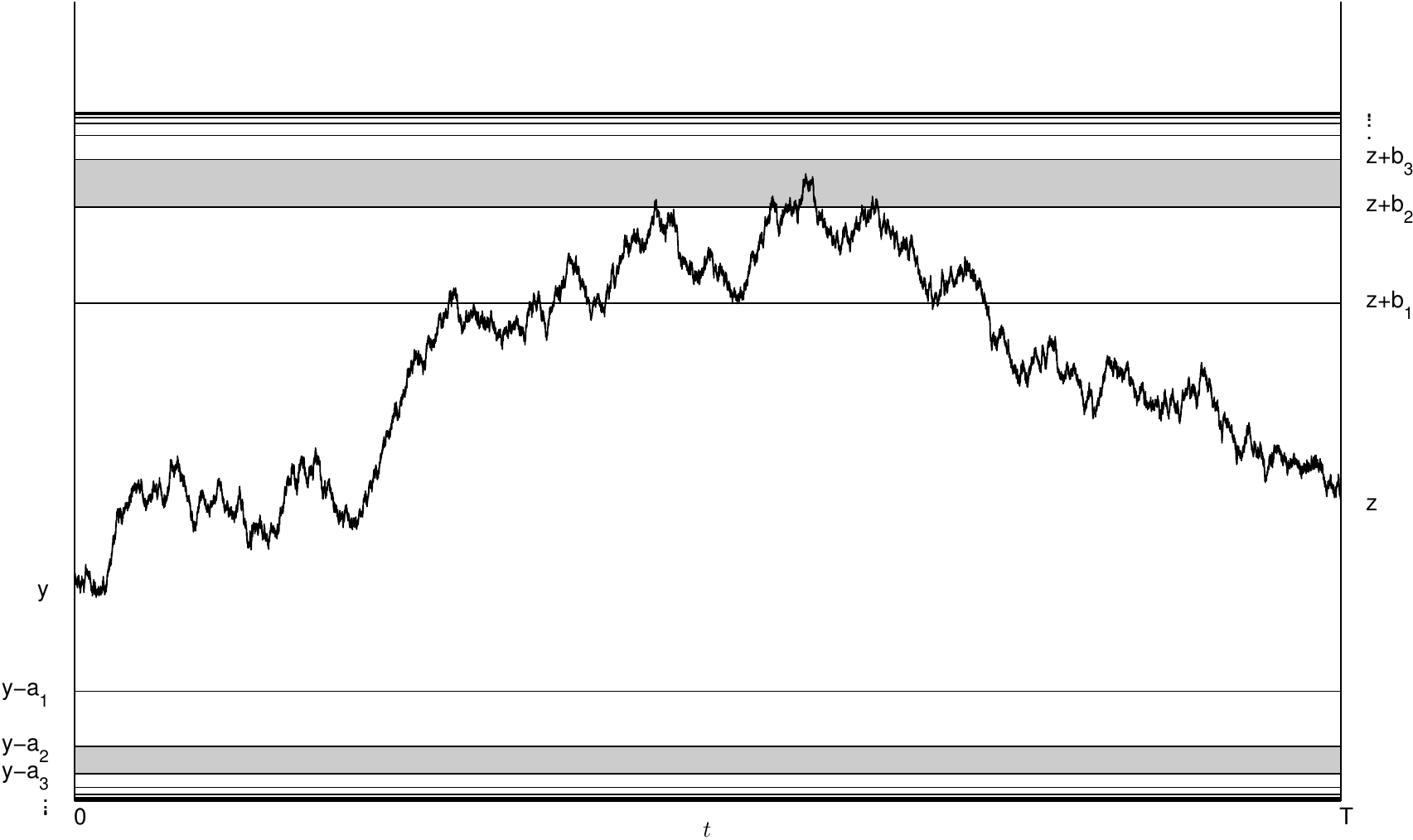}
\caption{\label{fig:layers}The layers of the state space for EA3. In this example, $Y_0 =y < z = Y_T$, and layers converge to each of the boundaries (both finite in this example). The layer of this sample path is $I = 3$, and $U_3$ has occurred.}
\end{figure}

The main idea is to partition $(a,b)$ into layers in such a way that the layers converge to each boundary (\fref{fig:layers}). As a consequence, simulated paths can approach either boundary arbitrarily closely, as required, yet we can still compute a bound $r(\gU)$ on $\phi$ whatever layer $I$ is actually simulated. Recall we must first simulate a layer $\gU = I$ of $Y$ (Step 2 of EA), and then simulate points of $Y$ given its layer (Step 4). A careful reading of the algorithm described by \citet{bes:etal:2008} shows that we need only modify their Step 4 to allow for unequal layers approaching the two boundaries, i.e.\ if EA3 is to be applied to two finite entrance boundaries, we can no longer choose to define the layers symmetrically, $a_i = b_i$, as is done in \citet{bes:etal:2008}. 
In the remainder of this section, I generalize their Step 4 to allow $a_i \neq b_i$.

Although it is not possible to simulate from $\bbW_y\mid Y_T, I$ directly, it \emph{is} possible to obtain points distributed according to this law by using a further rejection step, in which another, simpler process is used to sample candidate paths.  
Let $\bbW_{y \to z,T}$ denote the probability measure corresponding to a Brownian bridge from $y$ to $z$ in time $T$, and let $\bbW_{A; y_1\to y_2,T}$ denote the probability measure obtained after restricting the Brownian bridge to an event $A$. Define the events
\begin{align*}
\overline{M}_i &:= \left\{M_T \in [\zb + a_{i-1}, \zb + a_i)\right\},\\
\underline{M}_i &:= \left\{m_T \in [\yb - a_i, \yb - a_{i-1})\right\}.
\end{align*}
We sample candidate paths from the mixture
\begin{equation}
\label{eq:PDI}
\bbP_{D_I;y\to z,T} := \gl\bbW_{\overline{M}_I ; y\to z,T} + (1-\gl)\bbW_{\underline{M}_I ; y\to z,T}, \qquad \gl \in (0,1).
\end{equation}
The intuition behind this proposal measure is clear: it ensures that at least the first of the two sets constituting $U_I$ (or $L_I$) in equation \eref{eq:UI,LI} occurs. It is possible to simulate paths from \eref{eq:PDI} by first selecting from the mixture using a Bernoulli$(\gl)$ random variable, and then simulating a Brownian bridge conditioned on the chosen extremum falling within layer $I$. \citet{bes:etal:2006:B} provide details on how to simulate the maximum (or minimum) of a Brownian bridge and then simulate the rest of the bridge given this extremum. The distribution function of the extremum of a Brownian bridge is known in closed form, 
so it is a simple matter to condition the extremum to lie within a given interval. Rejection from $\bbP_{D_I;y\to z,T}$ to obtain paths distributed according to $\bbW_{D_I;y\to z,T}$ can be achieved by the following result. 
\begin{theorem}
\label{thm:EA3}
$\bbW_{D_I;y\to z,T}$ is absolutely continuous with respect to $\bbP_{D_I;y\to z,T}$, and if we choose
\begin{equation}
\label{eq:lambda}
\gl = \frac{\bbW_{y\to z,T}(\overline{M}_I)}{\bbW_{y\to z,T}(\overline{M}_I) + \bbW_{y\to z,T}(\underline{M}_I)},
\end{equation}
then the corresponding Radon-Nikod\'ym derivative is
\begin{equation}
\label{eq:thmEA3}
\frac{d\bbW_{D_I;y\to z,T}}{d\bbP_{D_I;y\to z,T}}(Y) = \frac{\bbW_{y\to z,T}(\overline{M}_I) + \bbW_{y\to z,T}(\underline{M}_I)}{\bbW_{y\to z,T}(\overline{D}_I)} \cdot \frac{\bbI(Y \in D_I)}{1 + \bbI(Y \in U_I \cap L_I)}.
\end{equation}
\end{theorem}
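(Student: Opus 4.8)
The plan is to reduce both measures to the unconditioned Brownian bridge law $\bbW_{y\to z,T}$, compute each Radon--Nikod\'ym derivative against this common reference, and then take the quotient. Write $p := \bbW_{y\to z,T}(\overline{M}_I)$ and $q := \bbW_{y\to z,T}(\underline{M}_I)$, so that the prescribed weight \eref{eq:lambda} is $\gl = p/(p+q)$. Since each conditioned law is a restriction, $\bbW_{\overline{M}_I;y\to z,T}(\cdot) = \bbW_{y\to z,T}(\cdot\cap\overline{M}_I)/p$ and similarly for $\underline{M}_I$, substituting these and $\gl$ into the mixture \eref{eq:PDI} yields, for any measurable path set $C$,
\[
\bbP_{D_I;y\to z,T}(C) = \frac{1}{p+q}\left[\bbW_{y\to z,T}(C\cap\overline{M}_I) + \bbW_{y\to z,T}(C\cap\underline{M}_I)\right],
\]
so that
\[
\frac{d\bbP_{D_I;y\to z,T}}{d\bbW_{y\to z,T}}(Y) = \frac{1}{p+q}\left[\bbI(Y\in\overline{M}_I) + \bbI(Y\in\underline{M}_I)\right].
\]
The key algebraic manoeuvre is the elementary identity $\bbI_A + \bbI_B = \bbI_{A\cup B}(1 + \bbI_{A\cap B})$, applied with $A=\overline{M}_I$, $B=\underline{M}_I$; this exposes the overlap term that will eventually produce the factor $1 + \bbI(Y\in U_I\cap L_I)$.

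The second derivative is immediate, since $\bbW_{D_I;y\to z,T}$ is just $\bbW_{y\to z,T}$ conditioned on $D_I$: $d\bbW_{D_I;y\to z,T}/d\bbW_{y\to z,T}(Y) = \bbI(Y\in D_I)/\bbW_{y\to z,T}(D_I)$. The heart of the argument is then two set-theoretic facts relating the proposal's extremum strata to the true layers of \eref{eq:UI,LI}. First, I would check the inclusions $U_I\subseteq\overline{M}_I$ and $L_I\subseteq\underline{M}_I$, whence $D_I\subseteq\overline{M}_I\cup\underline{M}_I$; this inclusion is exactly what delivers absolute continuity, because any $\bbP_{D_I;y\to z,T}$-null set meets $\overline{M}_I\cup\underline{M}_I$ (and hence $D_I$) in a $\bbW_{y\to z,T}$-null set. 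Second, intersecting the defining constraints in \eref{eq:UI,LI}, I would verify that $\overline{M}_I\cap\underline{M}_I$ and $U_I\cap L_I$ coincide up to a $\bbW_{y\to z,T}$-null boundary set, so that $\bbI(Y\in\overline{M}_I\cap\underline{M}_I) = \bbI(Y\in U_I\cap L_I)$ almost surely.

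With these in hand the conclusion follows from the chain rule for Radon--Nikod\'ym derivatives: the quotient $\bbP_{D_I;y\to z,T}$-density is bounded below by $1/(p+q)>0$ on $D_I$, so the ratio is well defined, and on $D_I$ the factor $\bbI(Y\in\overline{M}_I\cup\underline{M}_I)$ equals $1$ while the overlap indicator rewrites as $\bbI(Y\in U_I\cap L_I)$, giving
\[
\frac{d\bbW_{D_I;y\to z,T}}{d\bbP_{D_I;y\to z,T}}(Y) = \frac{p+q}{\bbW_{y\to z,T}(D_I)}\cdot\frac{\bbI(Y\in D_I)}{1 + \bbI(Y\in U_I\cap L_I)},
\]
which is \eref{eq:thmEA3} (with $\overline{D}_I = D_I$). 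The step I expect to be the main obstacle is the bookkeeping in the second set identity: one must track the half-open interval endpoints and the differing $a_i$/$b_i$ indexing across the upper and lower layers, and confirm that the mismatched pieces are visited by the bridge with probability zero so that they may be discarded. Everything else is routine manipulation of conditional laws.
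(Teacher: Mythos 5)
Your proposal is correct and follows essentially the same route as the paper's proof: both reduce $\bbW_{D_I;y\to z,T}$ and $\bbP_{D_I;y\to z,T}$ to the common reference measure $\bbW_{y\to z,T}$, take the quotient of the two densities, and simplify using the identity $\bbI(Y\in\overline{M}_I)+\bbI(Y\in\underline{M}_I)=1+\bbI(Y\in U_I\cap L_I)$ on $D_I$ (which you reach via $\bbI_A+\bbI_B=\bbI_{A\cup B}(1+\bbI_{A\cap B})$ together with the inclusions $U_I\subseteq\overline{M}_I$, $L_I\subseteq\underline{M}_I$). The only difference is presentational --- you substitute $\gl$ early and spell out the null-set bookkeeping for the half-open interval endpoints, which the paper leaves implicit since the extrema of a Brownian bridge have atomless laws.
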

\begin{proof}
Use the unconditional Brownian bridge as a reference measure to find:
\begin{align*}
\ds\frac{d\bbW_{D_I;y\to z,T}}{d\bbP_{D_I;y\to z,T}}(Y) &= \frac{\ds\frac{d\bbW_{D_I;y\to z,T}}{d\bbW_{y\to z,T}}(Y)}{\gl\ds\frac{d\bbW_{\overline{M}_I;y\to z,T}}{d\bbW_{y\to z,T}}(Y) + (1-\gl)\frac{d\bbW_{\underline{M}_I;y\to z,T}}{d\bbW_{y\to z,T}} (Y)},\\
&= \frac{\bbI(Y \in D_I)\bbW_{y\to z,T}(D_I)^{-1}}{\gl\bbI(Y\in \overline{M}_I)\bbW_{y\to z,T}(\overline{M}_I)^{-1} + (1-\gl)\bbI(Y \in \underline{M}_I)\bbW_{y\to z,T}(\underline{M}_I)^{-1}},
\end{align*}
which simplifies to \eref{eq:thmEA3} when $\gl$ is given by \eref{eq:lambda} (also noting that
\[
\bbI(Y\in \overline{M}_I) + \bbI(Y\in \underline{M}_I) = 1 + \bbI(Y\in U_I\cap L_I)
\]
on the event $D_I$).
\end{proof}
Note that choosing $a_i = b_i$ recovers Theorem 4 of \citet{bes:etal:2008}, in which $\gl = 1/2$. Also notice that, because the distribution function of the extremum of a Brownian bridge is known, the general choice of $\gl$ in \eref{eq:lambda} can be computed exactly. It remains to simulate random indicators for the events $\{Y \in D_I\}$ (given $I$) and $\{Y \in U_I \cap L_I\}$ (given $I$ and $\{Y \in D_I\}$), which proceeds as in the symmetric case \citep{bes:etal:2008}. 

\section{Discussion}
\label{sec:discussion}
In this paper I have developed an efficient, exact algorithm for simulating from the law of a diffusion on $(a,\infty)$ with a finite entrance boundary at $a$. The algorithm is applicable when the boundary behaviour is matched by that of a Bessel process, 
which covers a number of interesting examples including conditioned diffusions [equation \eref{eq:conditioned}], the wide-sense Bessel process [equation \eref{eq:Besseldriftgen}], and a very general model of population growth [equation \eref{eq:growthgen}]. In an application to the latter, it was shown that using the Bessel process instead of Brownian motion to generate candidate paths gives a striking improvement in efficiency. For a diffusion with two entrance boundaries, I developed a tractable exact algorithm which uses Brownian motion as the candidate process.

There are a number of directions for further research. Perhaps the greatest restriction on the algorithm developed here is assumption (B*), which does not apply if for example the diffusion also has an upper entrance boundary. We should like to relax (B*) to the following:
\begin{itemize}
\item[(B**)] $\lim_{u \to 0+} \widetilde{\phi}(u) < \infty$.
\end{itemize}
This is the Bessel analogue of (**), and makes no restrictions on the drift away from $0$. 
For diffusions satisfying (**), \citet{bes:etal:2006:B} tackled the analogous problem by first simulating the maximum of a Brownian bridge path together with the time it is attained. This was possible because these distributions take on a simple form and are easy to simulate. There are grounds for optimism that we might take a similar approach using the Bessel process. Remarkably, the Bessel process is one of few well-characterized diffusions for which we also have some results on the distribution of the maximum of its bridge and the time it is attained \citep{pit:yor:1999, bor:sal:2002}. Here though, the relevant distributions are rather more complicated, expressible only in infinite series form. Exact simulation from these distributions will be the subject of a future paper.


A further extension of this work would be to handle other types of boundary behaviour. Much of the preceding argument, including \propref{prop:mak:gle:2010}, continues to hold for $0 < \gd < 2$, which could then be used in a rejection sampling algorithm for a target diffusion with an instantaneously reflecting boundary. However, great care must be taken in ensuring the assumptions of the algorithm are met. In particular we can no long write the Radon-Nikod\'ym derivative in the form \eref{eq:RDchain}; moreover, for $0 < \gd < 1$ the Bessel process is not even a semimartingale beyond $T_0$ \citep{rev:yor:1999, mij:uru:ip}.


Finally, the contributions in this paper illustrate an important concept: that it is possible to implement the exact algorithm using a \emph{non}-Brownian candidate process. This raises the interesting question: What candidate diffusions \emph{other than} Brownian motion and the Bessel process are available to use in the framework of the exact algorithm (and when would they be useful)?

\section*{Acknowledgements}
This work benefitted from many helpful discussions: with Steve Evans, Gareth Roberts, Joshua Schraiber, and Dario Span\`o.


\begin{thebibliography}{20}
\providecommand{\natexlab}[1]{#1}
\providecommand{\url}[1]{\texttt{#1}}
\expandafter\ifx\csname urlstyle\endcsname\relax
  \providecommand{\doi}[1]{doi: #1}\else
  \providecommand{\doi}{doi: \begingroup \urlstyle{rm}\Url}\fi

\bibitem[A\"it-Sahalia(2008)]{ait:2008}
A\"it-Sahalia, Y. (2008).
\newblock Closed-form likelihood expansions for multivariate diffusions.
\newblock \emph{Ann. Statist.}, {\bf 36}, 906--937.

\bibitem[Beskos and Roberts(2005)]{bes:rob:2005}
Beskos, A. and Roberts, G.~O. (2005).
\newblock Exact simulation of diffusions.
\newblock \emph{Ann. Appl. Probab.}, {\bf 15}, 2422--2444.

\bibitem[Beskos et~al.(2006{\natexlab{a}})Beskos, Papaspiliopoulos, and
  Roberts]{bes:etal:2006:B}
Beskos, A., Papaspiliopoulos, O., and Roberts, G.~O. (2006{\natexlab{a}}).
\newblock Retrospective exact simulation of diffusion sample paths with
  applications.
\newblock \emph{Bernoulli}, {\bf 12}, 1077--1098.

\bibitem[Beskos et~al.(2006{\natexlab{b}})Beskos, Papaspiliopoulos, Roberts,
  and Fearnhead]{bes:etal:2006:JRSSB}
Beskos, A., Papaspiliopoulos, O., Roberts, G.~O., and Fearnhead, P.
  (2006{\natexlab{b}}).
\newblock Exact and computationally efficient likelihood-based estimation for
  discretely observed diffusion processes.
\newblock \emph{J. R. Stat. Soc. Ser. B Stat. Methodol.}, {\bf 68}, 333--382.

\bibitem[Beskos et~al.(2008)Beskos, Papaspiliopoulos, and
  Roberts]{bes:etal:2008}
Beskos, A., Papaspiliopoulos, O., and Roberts, G.~O. (2008).
\newblock A factorisation of diffusion measure and finite sample path
  constructions.
\newblock \emph{Methodol. Comput. Appl. Probab.}, {\bf 10}, 85--104.

\bibitem[Borodin and Salminen(2002)]{bor:sal:2002}
Borodin, A.~N. and Salminen, P. (2002).
\newblock \emph{Handbook of {Brownian} motion: facts and formulae},
\newblock 2nd edition. Basel: Birkh\"auser Verlag.

\bibitem[Bullen(1998)]{bul:1998}
Bullen, P.~S. (1998).
\newblock \emph{A dictionary of inequalities}, volume~97 of \emph{Pitman
  monographs and surveys in pure and applied mathematics}.
\newblock Harlow: Longman.

\bibitem[Fearnhead et~al.(2008)Fearnhead, Papaspiliopoulos, and
  Roberts]{fea:etal:2008}
Fearnhead, P., Papaspiliopoulos, O., and Roberts, G.~O. (2008).
\newblock Particle filters for partially observed diffusions.
\newblock \emph{J. R. Stat. Soc. Ser. B Stat. Methodol.}, {\bf 40}, 755--777.

\bibitem[Fearnhead et~al.(2010)Fearnhead, Papaspiliopoulos, Roberts, and
  Stuart]{fea:etal:2010}
Fearnhead, P., Papaspiliopoulos, O., Roberts, G.~O., and Stuart, A. (2010).
\newblock Random-weight particle filtering of continuous time processes.
\newblock \emph{J. R. Stat. Soc. Ser. B Stat. Methodol.}, {\bf 72}, 497--512.

\bibitem[Karlin and Taylor(1981)]{kar:tay:1981}
Karlin, S. and Taylor, H.~M. (1981).
\newblock \emph{A second course in stochastic processes}.
\newblock New York: Academic Press.

\bibitem[Kloeden and Platen(1999)]{klo:pla:1999}
Kloeden, P.~E. and Platen, E. (1999).
\newblock \emph{Numerical solution of stochastic differential equations},
\newblock 3rd printing. Berlin: Springer-Verlag.

\bibitem[Makarov and Glew(2010)]{mak:gle:2010}
Makarov, R.~N. and Glew, D. (2010).
\newblock Exact simulation of {Bessel} diffusions.
\newblock \emph{Monte Carlo Methods Appl.}, {\bf 16}, 286--306.

\bibitem[Mijatovi\'c and Urusov(2013)]{mij:uru:ip}
Mijatovi\'c, A. and Urusov, M. (2013).
\newblock On the loss of the semimartingale property at the hitting time of a
  level.
\newblock \emph{J. Theoret. Probab.}
\newblock In press. {\it arXiv:1304.1377}.

\bibitem[Pitman and Yor(1981)]{pit:yor:1981}
Pitman, J. and Yor, M. (1981).
\newblock Bessel processes and infinitely divisible laws.
\newblock In Williams, D., editor, \emph{Stochastic Integrals}, volume 851 of
  \emph{Lecture Notes in Mathematics}, pages 285--370. Berlin: Springer.

\bibitem[Pitman and Yor(1999)]{pit:yor:1999}
Pitman, J. and Yor, M. (1999).
\newblock The law of the maximum of a {Bessel} bridge.
\newblock \emph{Electron. J. Probab.}, {\bf 4}, No.~15, 1--35.

\bibitem[Revuz and Yor(1999)]{rev:yor:1999}
Revuz, D. and Yor, M. (1999).
\newblock \emph{Continuous martingales and {Brownian} motion},
\newblock 3rd edition. Berlin: Springer-Verlag.

\bibitem[Rogers and Pitman(1981)]{rog:pit:1981}
Rogers, L. C.~G. and Pitman, J.~W. (1981).
\newblock Markov functions.
\newblock \emph{Ann. Probab.}, {\bf 9}, 573--582.

\bibitem[Salminen(1984)]{sal:1984}
Salminen, P. (1984).
\newblock One-dimensional diffusions and their exit spaces.
\newblock \emph{Math. Scand.}, {\bf 54}, 209--220.

\bibitem[Schraiber et~al.(2013)Schraiber, Griffiths, and Evans]{sch:etal:2013}
Schraiber, J., Griffiths, R.~C., and Evans, S.~N. (2013).
\newblock Analysis and rejection sampling of {Wright-Fisher} diffusion bridges.
\newblock \emph{Theoret. Population Biology}, {\bf 89}, 64--74.

\bibitem[Watanabe(1975)]{wat:1975:ZWVG}
Watanabe, S. (1975).
\newblock On time inversion of one-dimensional diffusion processes.
\newblock \emph{Z. Wahrscheinlichkeitstheorie und Verw. Gebiete}, {\bf 31},
  115--124.

\end{thebibliography}
\end{document}